\newtheorem{theorem}{Proposition}
\newtheorem{definition}{Definition}
\begin{document}

\preprint{APS/123-QED}

\title{Nodal dynamics, not degree distributions, determine the \\
structural controllability of complex networks}
\author{Noah J.~Cowan}
\affiliation{Department of Mechanical Engineering, Johns Hopkins University, Baltimore, MD 21218}
%\email{Corresponding author: ncowan@jhu.edu}
\author{Erick J.~Chastain}
\affiliation{Department of Computer Science, Rutgers University, New Brunswick, NJ 08901}
\author{Daril A.~Vilhena}
\affiliation{Department of Biology, University of Washington, Seattle, WA 98105}
\author{James S.~Freudenberg}
\affiliation{Department of Electrical Engineering and Computer Science, University of Michigan, Ann Arbor, MI 48109}
\author{Carl T.~Bergstrom}
\affiliation{Department of Biology, University of Washington, Seattle, WA 98105}
\affiliation{Santa Fe Institute, 1399 Hyde Park Rd., Santa Fe, NM 87501}

\date{\today}

\begin{abstract}
Structural controllability has been proposed as an analytical framework for making predictions regarding the control of complex networks across myriad disciplines in the physical and life sciences (Liu et al., Nature:473(7346):167-173, 2011). Although the integration of control theory and network analysis is important, we argue that the application of the structural controllability framework to most if not all real-world networks leads to the conclusion that a single control input, applied to the power dominating set, is all that is needed for structural controllability. This result is consistent with the well-known fact that controllability and its dual observability are generic properties of systems. We argue that more important than issues of structural controllability are the questions of whether a system is almost uncontrollable, whether it is almost unobservable, and whether it possesses almost pole-zero cancellations.
\end{abstract}

\maketitle

\section{Introduction}

How can we control complex networks of dynamical systems
\cite{sorrentino2007controllability,duan2007complex,porfiri2008criteria,zhou2008pinning,li2004synchronization,xiang2007pinning,sorrentino2007effects,lozovanudetermining2010,lozovanuoptimal2010}?
Is it sufficient to control a few nodes, or are inputs needed at a
large fraction of the nodes in the network? Which nodes need to be
controlled? A recent paper \cite{liucontrollability2011} suggests that
we can address these problems using the concept of structural
controllability \cite{linstructural1974}, and in doing so we may be
able to forge new connections between control theory and complex
networks.  The two main results from this analysis are (1) that the
number of driver nodes, $N_D$, necessary to control a network is
determined by the network's degree distribution and (2) that $N_D$
tends to comprise a substantial fraction of the nodes in inhomogeneous
networks such as the real-world examples considered therein.

However, both conclusions hinge on a critical assumption of the model
in \cite{liucontrollability2011}: the results (implicitly) require
that the ``default'' structures of the dynamical systems at the nodes
of the network have infinite time constants. This modeling assumption
implies that, unless otherwise specified by a self-link in the
network, a node's state never changes absent influence from inbound
connections. However, the real networks considered
in~\cite{liucontrollability2011}---including food webs, power grids,
electronic circuits, regulatory networks, and neuronal
networks---typically manifest more general dynamics at each node,
i.e.\ they typically have finite time constants
\cite{lestasfundamental2010,marderprinciples1996,berrymancredible1995}.

With this assumption, the minimum number of independent control inputs
required to ensure a technical property known as \emph{structural
  controllability} \cite{linstructural1974} can be calculated for the
network, as described in \cite{liucontrollability2011}. The main
problem with the argument set forth in \cite{liucontrollability2011}
is not a technical one: indeed the assumptions therein are clear and
the mathematical results are correct. Then, why are the results
tenuous? Critically, structural controllability
\cite{linstructural1974} is premised on the idea that if the
\emph{nonzero} parameters in the mathematical model can be selected so
that the system is controllable (an elementary concept in control
theory; see for example \cite{rughlinear1996}), then the system will
be controllable for all parameters except a set of zero measure. That
is, if the system is controllable for one set of (initially nonzero)
parameters, then controllability is guaranteed generically for that
system. The results presented in \cite{liucontrollability2011} require
that a critical assumption be made \emph{before} applying the
structural controllability approach.  Specifically, it is assumed that
each node has an infinite time constant. As we shall see in the next
section, the assumption of an infinite time constant implies that a
certain parameter in the mathematical model of the system is equal to
zero, and therefore that term is off-limits as far as structural
controllability is concerned.  As one can imagine, any approach to
system analysis that only allows the modification of nonzero terms,
makes the results potentially quite sensitive to which terms are set
to zero in the first place. Indeed, if the infinite-time-constant
assumption is relaxed, and generic linear dynamics are ascribed to
each node, one obtains a categorically different result.  Indeed, we
show in this paper that all networks with finite-dimensional linear
dynamics (save a special set of parameters of zero measure) are
controllable with a single input. While mathematically true, such a
conclusion is neither reasonable nor practical for real-world
networks, and thus calls into question the general approach of
applying structural controllability in this way.

Assuming arbitrary (up to a set of measure zero) linear dynamics, we
show here that (1) a single time-dependent input is all that is needed
for structural controllability, and (2) that this input should be
applied to the power dominating set (PDS)
\cite{aazamiapproximation2007} of the network. Thus for many if not
all naturally occurring network systems, structural controllability
does not depend on degree distribution and can always be conferred
with a single control input.

\section{Modeling Networks for Control}
Large interconnected systems are commonly represented as complex networks \cite{strogatz2001exploring,newman2001clustering}. For many biological and physical networks, each node in the network corresponds to a dynamical system. Often, the dynamics of these nodes can be modeled by a system of ordinary differential equations \cite{chenpinning2007,wangpinning2002}:
\begin{equation}\label{eq:nodaldyn}
  \dot x_i = -p_i x_i + \sum_{k = 1}^N a_{ik} x_k(t) + \sum_{j=1}^P b_{ij} u_j(t),
\end{equation}
where $x_i$ is a state at node $i$, $N$ is the number of nodes, $P$ is
the number of inputs, and the $n^2$ elements $a_{ik}$ populate the
adjacency matrix.  Here, the term $-p_i x_i$ represents the intrinsic
dynamics at the node, absent external influences. The external inputs,
$u_j(t)$, enter the system through the coupling matrix
$\{b_{ij}\}$. For analyzing controllability, it is reasonable as a
first step to consider purely linear dynamics as shown in
Eq.~\eqref{eq:nodaldyn}---an approach clearly articulated and well
motivated by \cite{liucontrollability2011}.

Note that Eq.~\eqref{eq:nodaldyn} includes two terms in dynamics for
$x_i$, one related to the linearization of the intrinsic nodal
dynamics, namely $-p_i x_i$, and one related to a potential self link
in the model, namely $ a_{ii} x_i$, related to the network
topology. Although both terms are identical mathematically, they arise
from categorically different sources, and thus are not
interchangeable.

The term $-p_i$ is the \emph{pole} of the linear dynamical system at
each node, and $\tau_i = 1/p_i$ is the associated time
constant. Rewriting in terms of transfer functions, we have
\begin{equation}\label{cowanmodel}
 X_i(s) =G_i(s)  \left[ \sum_{k=1}^{N} a_{ik} X_k(s) + \sum_{j=1}^{P } b_{ij} U_j(s) \right], 
\end{equation}
where $X_i(s)$  and  $U_j(s)$ are the Laplace transforms of state $x_i(t)$ and input $u_j(t)$ respectively, and
\begin{equation*}
  G_i(s) = \frac{1}{s+p_i},
\end{equation*}
is the transfer function of node $i$. This formulation is useful because it suggests inclusion of more general linear dynamics: the transfer function, $G_i(s)$, can be replaced by any transfer function, of arbitrary order.

The dynamics proposed in \cite{liucontrollability2011} (see the
supplemental material therein) are identical to \eqref{cowanmodel},
except that $p_i\equiv 0$ for all $i$, namely $ G_i(s) = 1/(s +
0)$---a pure integrator. Written this way the simplifying assumption
of the model in \cite{liucontrollability2011} becomes clear: all
subsystems by default have an infinite time constants (that is, the
term $p_i=0$) unless such dynamics are explicitly included in the
network data set through nonzero diagonal terms, $a_{ii}\neq 0$, in
the adjacency matrix. 

However, infinite time constants at each node do not generally reflect
the dynamics of the physical and biological systems in Table 1 of
\cite{liucontrollability2011}.  Reproduction and mortality schedules
imply species-specific time constants in trophic networks. Molecular
products spontaneously degrade at different rates in protein
interaction networks and gene regulatory networks.  Absent synaptic
input, neuronal activity returns to baseline at cell-specific
rates. Indeed, most if not all systems in physics, biology, chemistry,
ecology, and engineering will have a linearization with a finite time
constant. Thus while the model in \cite{liucontrollability2011} does
not proscribe self-links, this approach does place the onus on the
modeler to ensure that any network representation includes such
self-links where appropriate to compensate for the omission of the
\emph{intrinsic} nodal dynamics that arise due to physical,
biological, or other processes that, generally speaking, have nothing
to do with network topology.

To see the consequences of including generic nodal dynamics on a
network's structural controllability, we first rewrite the network
dynamics in \eqref{cowanmodel} in state space form: 
\begin{equation}
  \label{eq:sys}
  \begin{aligned}
    \dot{\mathbf{x}}(t) &= \hat{A}  \mathbf{x}(t) + B  \mathbf{u}(t), \\
    \hat{A} &= \left[A - \mathrm{diag}(p_1,p_2,p_3,\ldots,p_N) \right],
 \end{aligned}
\end{equation}
where $A\in\mathbb{R}^{N\times N}$ is the adjacency matrix, and $B \in\mathbb{R}^{N\times P}$ is the input matrix. The vector $\mathbf{x}(t)\in\mathbb{R}^N$ is the vector of node states, and $\mathbf{u}(t)\in\mathbb{R}^P$ is the  input vector.

The system in Eq.~\eqref{eq:sys} is controllable if and only if the matrix
\begin{equation}
  \label{eq:controllability}
  \begin{bmatrix}  B, & \hat{A}  B, &  \cdots &\hat{A}^{N-1} B \end{bmatrix}
\end{equation}
 is full rank, a standard result in control theory \cite{rughlinear1996}. The system is said to be \emph{structurally controllable} if the nonzero weights in $\hat{A}$ and $B$ can be adjusted such that the matrix in Eq.~\eqref{eq:controllability} is full rank~\cite{linstructural1974}.

 In \cite{liucontrollability2011}, the minimum number of driver nodes,
 $N_D$, is defined as the minimum number of inputs---i.e.,\
 independent, user defined, time-varying functions---such that when
 injected into the network guarantee structural controllability. This
 formulation explicitly allows each independent input to be connected
 to multiple (and possibly all)\ nodes in the network
 \cite{liucontrollability2011,liupersonal2011}.

 The paper \cite{liucontrollability2011} solves this minimum input
 problem using an  application of graph-theoretic
 concepts; their basic approach is to identify the number of
 ``unmatched nodes'' after finding a so-called maximum matching of the
 graph. Details are provided in the supplemental material of
 \cite{liucontrollability2011}; note also the prior analysis wherein
 the maximum matching theorem seems first to have been proved
 \cite{commaultcharacterization2002}. We observe that one can recast
 the poles at $-p_i$ as (nonzero) self-links. But the set of all
 self-links $(i\rightarrow i)$ is itself a maximum matching; all nodes
 in the network are then matched nodes. This implies that the network
 can be controlled with a single input, i.e.\ $N_D=1$, which follows
 directly from the maximum matching proof in
 \cite{liucontrollability2011,slotinepersonal2011}.

 \section{Structural Controllability of Networks with General Linear
   Dynamics }
The following proposition provides a simple non-graph-theoretic proof that a ``control hub'' --- a  single driver node attached to \emph{all nodes} --- guarantees structural controllability with a single input.
\begin{theorem}\label{thm:hub}
  For any directed network with nodal dynamics in Eq.~\ref{cowanmodel}
  (or equivalently Eq.~\ref{eq:sys}), with $p_i\neq 0$ and/or
  $a_{ii}\neq 0$, $i=1,\ldots,N$, then $N_D=1$.
\end{theorem}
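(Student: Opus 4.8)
The plan is to exhibit a single input---a ``control hub'' with $B=\mathbf b=(b_1,\dots,b_N)^{\top}$ coupled to every node through free weights $b_i$---and to show that the controllability matrix in Eq.~\eqref{eq:controllability} can be made full rank. First I would note that, under the hypothesis $p_i\neq 0$ and/or $a_{ii}\neq 0$, each diagonal entry $\hat A_{ii}=a_{ii}-p_i$ is a structurally nonzero (hence freely adjustable) parameter of $\hat A$. Since controllability of the single-input pair $(\hat A,\mathbf b)$ is equivalent to the $N\times N$ matrix $[\,\mathbf b,\ \hat A\mathbf b,\ \dots,\ \hat A^{N-1}\mathbf b\,]$ having nonzero determinant, and since this determinant is a polynomial in the free parameters, it suffices to show that the polynomial is not identically zero; genericity then supplies a valid parameter assignment.

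Second, I would evaluate that determinant at one convenient admissible point. Set every off-diagonal free weight $a_{ik}$ ($i\neq k$) to zero and assign the $N$ free diagonal entries $N$ \emph{distinct} values $\lambda_1,\dots,\lambda_N$, which is possible precisely because each diagonal entry is free. Then $\hat A=\mathrm{diag}(\lambda_1,\dots,\lambda_N)$, so $\hat A^{k}\mathbf b=(\lambda_1^{k}b_1,\dots,\lambda_N^{k}b_N)^{\top}$ and the controllability matrix factors as a diagonal matrix times a Vandermonde matrix:
\begin{equation*}
  \det\bigl[\,\mathbf b,\ \hat A\mathbf b,\ \dots,\ \hat A^{N-1}\mathbf b\,\bigr]
  =\Bigl(\prod_{i=1}^{N} b_i\Bigr)\prod_{1\le i<j\le N}(\lambda_j-\lambda_i).
\end{equation*}
With the $b_i$ nonzero and the $\lambda_i$ distinct this product is nonzero, so the determinant polynomial does not vanish identically.

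Finally, I would close by genericity on the admissible parameter space. Regarding the determinant as a polynomial in only the structurally free entries (the edge weights $a_{ik}$, the diagonal entries $\hat A_{ii}$, and the coupling weights $b_i$, with every other entry fixed at $0$), it is nonzero at the point just constructed and therefore nonzero on a dense open subset of that space. In particular, there exist admissible values with \emph{every} edge weight and \emph{every} $b_i$ nonzero for which the determinant is nonzero. Equivalently, in the Hautus picture, a generic choice with distinct diagonal values makes $\hat A$ non-derogatory and $\mathbf b$ a cyclic vector, so that no left eigenvector of $\hat A$ is orthogonal to $\mathbf b$. Hence a single hub input renders the system structurally controllable, giving $N_D=1$.

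I expect the main obstacle to be the bookkeeping of \emph{which} parameters are admissibly free: the genericity step must vary only the structurally nonzero entries while holding the pattern's true zeros at zero, yet still reach a point where all edge weights are nonzero. The hypothesis $p_i\neq 0$ and/or $a_{ii}\neq 0$ is exactly what furnishes $N$ free diagonal parameters and thus the distinct-eigenvalue Vandermonde evaluation; without it, forced repeated zero eigenvalues on the diagonal would obstruct single-input controllability and the argument would fail.
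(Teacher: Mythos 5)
Your proposal is correct and follows essentially the same route as the paper's proof: connect a single hub input to every node, zero out the off-diagonal edges, use the structurally nonzero diagonal entries to place $N$ distinct eigenvalues, observe that the resulting diagonal pair is controllable (you make explicit, via the Vandermonde determinant, what the paper dismisses ``by inspection''), and then invoke Lin's genericity argument. The only cosmetic difference is that you keep the coupling weights $b_i$ as free parameters where the paper simply fixes $B=[1,\ldots,1]^{T}$; both choices are admissible and the argument is unchanged.
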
 
\begin{proof}
Select $B=[1, 1, \ldots, 1]^T$ (that is, connect a single input to all nodes). Lin's structural controllability theorem \cite{linstructural1974} states that if the system is controllable for one choice of the nonzero system parameters, then it will be controllable for all parameters except a set of measure zero. So, we explicitly construct a parameter set that makes the system controllable. Keep $B$ as all ones, and choose $p_1,p_2,\ldots,p_N$ to be nonzero and distinct.  Zero out all the network edges (i.e.\ nullify the adjacency matrix, $A=0$). The system matrix $\hat{A}$ is now a diagonal matrix with distinct eigenvalues. Controllability of $(\hat{A},B)$ follows by inspection. Thus, the system is structurally controllable and $N_D=1$.
\end{proof}

By contrast the paper \cite{liucontrollability2011} reported that for
real-world networks, the minimum number of driver nodes $N_D$ is
strongly influenced by the sparseness and homogeneity of the network,
as measured by the \emph{degree distribution},
$P(k_{\mathrm{in}},k_{\mathrm{out}})$ (see
\cite{liucontrollability2011} for more details). Why did
\cite{liucontrollability2011} arrive at such different conclusions?
Critically, the application of structural controllability does not
consider variations in system parameters that are \emph{a priori} zero
\cite{linstructural1974}. So, for example, if a link $i\rightarrow j$
is absent, then $a_{ij}\equiv 0$. The original paper
\cite{liucontrollability2011} allows for self-links but by default
does not include them. Further, the framework set forth in
\cite{liucontrollability2011} assumes $p_i=0$ (infinite time
constant), and the network datasets in Table 1 of
\cite{liucontrollability2011} do not include self-links to correct for
this. Therefore, upon inclusion of first-order self dynamics,
essentially all real networks are structurally controllable with
$N_D=1$, irrespective of network topology.

In the case that the network topology does not explicitly contain self
links, the consequence of ascribing pure integrator dynamics ($p_i=0$)
to each node is categorical: the system is necessarily unstable. This
is because the sum of the eigenvalues is given by the trace of the
system matrix, which, in this case, would be
$\mathrm{trace}(\hat{A})=0$, since there are zeros on the entire
diagonal. This would imply that it is impossible to have any stable
eigenvalues (negative real parts) without also having unstable ones
(positive real parts), so that their sum is zero. Therefore, such a
network of integrators must be purely oscillatory or unstable, and
cannot be asymptotically stable. Therefore, assuming pure integrators
at each node, and no explicit self-links in the adjacency matrix,
precludes passive stability which many natural systems enjoy.

Have we taken the point about generic nodal dynamics too far? It may
be desirable to model and control a network on a timescale that is
faster than the dynamics of the intrinsic nodal dynamics. We concede
that in such cases, it may be reasonable to treat the nodal dynamics
as pure integrators (systems with infinite time constants). However,
we argue that structural controllability may not be appropriate for
addressing these nuanced modeling issues. An essential feature of
structural controllability is that no importance is assigned to
\emph{specific values} for the non-zero terms in the dynamics. Values
are treated as either zero or not zero; there is no in-between. Thus,
the choice of whether to zero out the self-loop terms \emph{a priori}
is a subtle modeling issue that should take into account the
\emph{emergent timescales of the entire network}. Therefore, we
contend that model reduction \cite{mooreprincipal1981}---which is
essential for controller design---should be treated at the level of
the entire network dynamics rather that at the level of individual
nodes: indeed the timescales relevant for control are an emergent
property of the \emph{system} dynamics, and not strictly a feature of
one node or another. With this in mind, we find that the tool of
structural controllability---which is premised on a notion of generic
parameters---is best suited to generic modeling assumptions. In this
case this means assuming $p_i\neq 0$, $i=1,\ldots N$.

Above, we argue that structural controllability of complex networks
depends on the dynamics at each node, and that only a single time
varying input is required. Two questions remain: (1) How sensitive is
structural controllability to the dimension of the state space for
each node? (2) Where should we inject the $N_D$ independent time
inputs into the network, i.e.\ what is the minimum number of nodes of
the network to which the input must be connected? Proposition
\ref{thm:hub} explicitly depends on treating first order nodal
dynamics as ``self loops'' in the network. Below we offer a more
general treatment for arbitrary (linear) nodal dynamics that addresses
both questions above. See Figure~\ref{fig:pds}.

  \begin{figure}[htb]
    \centering
    \includegraphics{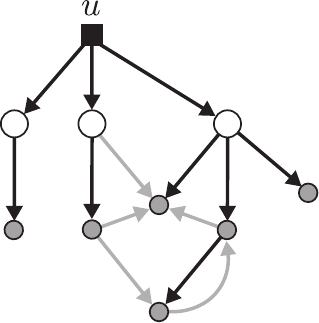}
    \caption{Given a network, the PDS (large white circles) is the smallest set of nodes such that all other nodes (smaller grey circles) are downstream of them. Any network, with arbitrary (and possibly different) order finite-dimensional linear dynamics at each node is structurally controllable from a single driver node (black square) tied to the PDS as shown. See Proposition~\ref{thm:pds}. The edges in the \emph{structural control network} are part of a minimum spanning tree (black edges, although this choice of edges, and indeed the PDS, is not necessarily unique). }
    \label{fig:pds}
  \end{figure}

Given a directed graph, a PDS is, by definition, the smallest set of
nodes such that all other nodes are downstream of at least one node in
the PDS. Obviously, controllability requires connecting the input(s)
at least to this set; below we show that structural controllability is
generically achieved by connecting a single input to the PDS. Before
doing this, we need one definition:

\begin{definition}
  Suppose that there are $K$ nodes in the PDS. Attach a single control
  input, $u$, to this set via a control node. Augment the graph with
  this control node and add the $K$ edges that connect it to the
  PDS. Then, all nodes are downstream of the input $u$ (i.e.\ the
  control node is now the PDS of the augmented graph). Define the
  \textbf{structural control network} as an acyclic directed graph given
  by a directed spanning tree that starts at $u$ and visits all nodes.
\end{definition}

We now state the main result.

\begin{theorem}\label{thm:pds}
  Consider the nodal dynamics in \eqref{cowanmodel}, with $G_i(s)$ an arbitrary, proper, rational transfer function \cite{rughlinear1996} of the form
  \begin{equation*}
    G_i(s) = \frac{n_i(s)}{d_i(s)},
  \end{equation*}
  where, $n_i(s)$ and $d_i(s)$ are assumed to be generic polynomials
  (all coefficients up to the order of the polynomial are assumed to
  be nonzero) of finite but arbitrary order in $s$. Then, the network
  is structurally controllable with one ($N_D=1$) independent input,
  connected to the PDS.
\end{theorem}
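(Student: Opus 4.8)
The plan is to mirror the strategy of Proposition~\ref{thm:hub}: rather than argue graph-theoretically, I would invoke Lin's theorem \cite{linstructural1974} and simply exhibit one assignment of the (generic) parameters for which the full state-space realization is controllable. Since controllability of the aggregate pair $(\mathcal{A},\mathcal{B})$ is equivalent to a maximal minor of its controllability matrix (the analogue of \eqref{eq:controllability}) being nonzero \cite{rughlinear1996}, and that minor is a polynomial in the coefficients of the $n_i,d_i$, the edge weights $a_{ik}$, and the single input weights, it suffices to find one parameter point where the polynomial does not vanish. Structural controllability then holds: the system is controllable for all parameter values outside a measure-zero set, in agreement with the generic-coefficient hypothesis.

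First I would build the realization. Write each node as a minimal SISO realization $(F_i,g_i,C_i,D_i)$ of $G_i=n_i/d_i$, of dimension $\deg d_i$ (minimal because generic $n_i,d_i$ are coprime), so the aggregate state has dimension $M=\sum_i \deg d_i$. The node input is $w_i=\sum_k a_{ik}y_k+b_i u$ with $y_k=C_k z_k+D_k w_k$; stacking the internal states $z_i$ yields the big pair $(\mathcal{A},\mathcal{B})$, with $P=1$ so the controllability matrix is $M\times M$. At the exhibited point I would connect the single input $u$ to the PDS and null every adjacency entry that is not an edge of the structural control network of the preceding Definition, keeping only the spanning-tree edges; this is a legitimate choice of the free parameters. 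Ordering the nodes by tree depth then makes $\mathcal{A}$ block lower-triangular with diagonal blocks $F_i$, so its spectrum is exactly the union of the node poles. Choosing all roots of all $d_i$ distinct --- a generic choice --- gives $M$ simple eigenvalues, reducing controllability to the PBH condition that no left eigenvector of $\mathcal{A}$ is orthogonal to $\mathcal{B}$.

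The crux, and the step I expect to be the main obstacle, is this PBH computation. For a mode $\lambda$ that is a pole of node $i_0$, the block-triangular structure forces the left eigenvector to be supported exactly on the unique tree path from $i_0$ up to the node $r$ adjacent to $u$ (descendants and off-path siblings drop out because $\lambda$ lies outside their spectra). Back-substituting up this path, I expect $v^{\top}\mathcal{B}$ to factor as the input weight $b_r$ times a product of the edge weights along the path, times the transfer-function values $G_p(\lambda)=n_p(\lambda)/d_p(\lambda)$ at the proper ancestors $p$ of $i_0$, times $v_{i_0}^{\top}g_{i_0}$. Each factor is nonzero generically: the edge and input weights are nonzero by construction, $v_{i_0}^{\top}g_{i_0}\neq 0$ because the node realization is controllable, and $G_p(\lambda)\neq 0$ is precisely the statement that a descendant's pole $\lambda$ is not cancelled by an ancestor's zero --- a condition that fails only on a measure-zero set. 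The feedthrough terms $D_i$ (present when $G_i$ is proper but not strictly proper) merely modify the off-diagonal coupling blocks and leave these factors as full transfer-function evaluations, so they are harmless.

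Hence $(\mathcal{A},\mathcal{B})$ is controllable at this point, and by Lin's theorem the network is structurally controllable from a single input tied to the PDS. Finally I would record that connection to the PDS is also \emph{necessary}: by definition of the PDS, any smaller attachment leaves some node with no directed path from $u$, hence an uncontrollable mode, so $N_D=1$ on the PDS is the sharp statement illustrated in Figure~\ref{fig:pds}.
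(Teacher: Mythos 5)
Your proposal is correct and follows the same overall strategy as the paper's proof: restrict to the spanning-tree (structural control) network rooted at the input, exploit genericity to make all node poles distinct and to rule out pole--zero cancellations along every root-to-node path, exhibit one controllable parameter point, and conclude via Lin's theorem. Where you diverge is in how controllability of that exhibited point is certified. The paper stays entirely in the transfer-function domain: it notes that the map from $u$ to node $j$ is the product $\prod_{k\in\mathcal{J}_j}G_k(s)$ of the transfer functions along the unique path, and argues that with distinct poles and no cancellations the minimal realization of $[H_1(s),\ldots,H_N(s)]^T$ must have one eigenvalue per pole, so the natural realization is already minimal and hence controllable. You instead build the aggregate state-space pair $(\mathcal{A},\mathcal{B})$ explicitly, observe that ordering by tree depth makes $\mathcal{A}$ block lower-triangular with the node matrices $F_i$ on the diagonal, and run a PBH computation: the left eigenvector for a pole of node $i_0$ is supported on the path up to the root, and $v^{\top}\mathcal{B}$ factors into the input weight, the path edge weights, the values $G_p(\lambda)$ at the ancestors, and $v_{i_0}^{\top}g_{i_0}$, each generically nonzero (I checked your factorization on a two-node chain; it is right, with $G_p(\lambda)\neq 0$ being exactly the paper's no-cancellation condition). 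Your route is longer but more explicit: it makes the paper's somewhat terse ``minimal realization must contain exactly one eigenvalue for each pole, thus all states are controllable'' step into a concrete eigenvector calculation, it handles the feedthrough terms of proper-but-not-strictly-proper $G_i$ that the paper glosses over, and it records the easy necessity direction (any attachment smaller than a PDS leaves an unreachable, hence uncontrollable, node), which the paper states only informally before the definition. The paper's version buys brevity and avoids constructing the realization at all.
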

\begin{proof}
  Using the structural controllability argument, we are free to modify
  any nonzero parameters; if the system is controllable for one set of
  parameters, it will be generically controllable.

  So, zero out all edges that are not in the structural control
  network and set all those in the structural control network to 1; if
  this process results in a controllable system, as we now show it
  does, then the system will be controllable generically. 

  All nodes in this structural control network are still downstream of
  $u$, but now there are no cycles. Since the structural control
  network is a minimum spanning tree, there is exactly one path
  between $u$ and any specific node, $i$. Let $\mathcal{J}_i$ denote
  the set of nodes along the path from $u$ to node $i$ in the
  structural control network. Then transfer function from $u$ to any
  given node is simply the product of the transfer functions along the
  path from $u$ to the node:
  \begin{equation}
    \label{eq:1}
    H_j(s) = \frac{X_j(s)}{U(s)} = \prod_{k\in\mathcal{J}_i} G_k(s),
    \quad j=1, 2,\ldots N.
  \end{equation}

  Since we may freely adjust the polynomial coefficients in the
  denominator terms, we do so to ensure there are no repeated poles in
  the entire network and similarly adjust the numerator coefficients
  to ensure no pole-zero cancellations along any path in the
  structural control network.  Since there are no pole-zero
  cancellations, and all poles in the network are unique, a minimal
  realization of the $N\times 1$ transfer function $[H_1(s), H_2(s),
  \ldots H_N(s)]^T$ must contain exactly one eigenvalue for each pole
  of the network. It is obvious that the minimal realization requires
  no more eigenvalues than that. The number of eigenvalues in the
  minimal realization is equivalent to the number of eigenvalues that
  are both controllable and observable. Thus all states are
  controllable for this parameter set and, by the' structural
  controllability theorem \cite{linstructural1974}, the network is
  structurally controllable.
\end{proof}

For first-order nodal dynamics, our main result is not substantively
different from those presented for discrete time and finite state
systems in \cite{sundaramcontrol2010,sundaramdistributed2008}. They
show that networks with nontrivial nodal dynamics are structurally
observable with a single output node and structurally controllable
with a single input node. Our modest generalization to arbitrary-order
nodal dynamics is at best incremental over their work. Indeed, the
main contribution of our paper lies not so much in any technical
advance as it does in providing a timely clarification of
\cite{liucontrollability2011}.

\section{Simple Example: A Food Web}
To illustrate the ideas of this paper, consider a simple food web comprising one predator and one prey species. Let $H$ denote the number of herbivores (prey) and $C$ denote the carnivores (predators).  We begin by noting that historically, the classic models of predator--prey dynamics  \cite{lotkaanalytical1920} take the form\begin{equation}
\label{eq:lotka}
  \begin{aligned}
  \dot C &=  -\gamma C + \epsilon C H\\
    \dot H &= \alpha H  - \beta H C
 \end{aligned}
 \implies
  \begin{aligned}
  \dot C &=  C(-\gamma + \epsilon H)\\
    \dot H &=  H(\alpha - \beta C),
 \end{aligned}
\end{equation}
where $\alpha,\beta,\gamma,\epsilon>0$. Linearizing these dynamics about the nontrivial equilibrium $(C^*,H^*)=(\frac{\alpha}{\beta},\frac{\gamma}{\epsilon})$ this model has the following local dynamics:
\begin{equation*}
  \dot x =
  \begin{bmatrix}
    0 & \frac{\alpha \epsilon}{\beta} \\
    -\frac{\beta \gamma}{\epsilon}  & 0
  \end{bmatrix}
  x
\end{equation*}
where $x=(\delta C, \delta H)$ is the vector of small displacements
relative to the equilibrium $(C^*,H^*)$. Note that the linearized food
web is fully connected (whereas \cite{liucontrollability2011} include
a nonzero edge for ``$C$ eats $H$'' but not for ``$C$ is eaten by $H$"
in their treatment of a trophic networks). Also, note that this
linearized system has infinite time constants at the nodes, i.e.,\
zero values along the diagonal. Thus these early models do not include
the finite time constants that we argue are so important to system
dynamics. Later work remedied this omission; the early models such as
Eq.~\eqref{eq:lotka} did not include terms that researchers
subsequently found to be essential for modeling real biological
systems, such as saturation effects arising from resource limitations
\cite{berrymancredible1995}. Including these additional terms leads to
a $2\times2$ system matrix that is fully populated with (generically)
nonzero terms on and off the diagonal. This implies that the
resulting linearization features finite time constants at each of the
nodes, and the network is fully connected. That is, where structural
controllability is concerned, taking into account the full dynamics of
a food web leads inescapably to the conclusion this system should be
controllable with a single input.

\section{Discussion}

Recently, it was reported that sparse inhomogenous networks require
distinct controllers for a large fraction of the nodes to attain
structural controllability \cite{liucontrollability2011}. We argue
that these results are a consequence of assuming a special structure
for the dynamics at each node: each node is treated as a pure
integrator. In the application of the model set forth in
\cite{liucontrollability2011} to the real networks considered therein,
each node is assumed to have an infinite time constant. In this paper,
we show that (1) for generic, arbitrary-order nodal dynamics,
structural controllability can be achieved with a single time-varying
input, and (2) that input should be attached to a PDS.

The property of a system being controllable has two significant interpretations in control theory. First, if a system is controllable then it is possible to find an input to transfer any initial state to any final state in finite time. Second, if a system is controllable then it is possible to apply a control signal consisting of a linear combination of the states that changes the dynamics arbitrarily. In particular, it is possible to stabilize an unstable system, a necessary design goal in engineering problems. Such a control signal is termed state feedback.

It is important to note what the first definition of controllability leaves out. For example, unless the final state is an equilibrium, the state will not remain there, but will move away. In many engineering applications, it is important to find an input that will both stabilize a system and hold a specified linear combination (or set of linear combinations) of states at desired constant values. This is referred to as the problem of setpoint tracking, and requires that the system be controllable (so that a stabilizing control input may  be found) and that there are at least as many independent control inputs as there are linear combinations of states to be held at desired setpoints \cite{kwakernaaklinear1972}. Hence we see that although one input may suffice to achieve controllability of an arbitrary number of state variables, in fact the number of inputs limits the number of setpoints that may be specified.

The property of controllability is generically present in a system, and thus in practice it is more important to know not whether a system is controllable, but whether it is almost uncontrollable. In the latter case, the control input used to drive the state to its desired value, or to achieve the desired dynamics, may be excessively large. Hence there is a need for tests---such as those based on the control Gramian \cite{rughlinear1996}---to determine what states are almost uncontrollable. In practice these are then treated as though they were indeed uncontrollable to avoid the excessively large inputs required to control them.

A more subtle problem arises with the second use of the controllability property. In practice, it is rarely possible to measure all the states of the system required for the control signal used to alter the dynamics of the system. Instead, the control signal is based on estimates of the states obtained by processing those states (or linear combinations of states) that are measurable. A system is said to be observable if it is possible to estimate the states using only the available outputs \cite{rughlinear1996}. As is the case with controllability, the property of observability is generically present, and it is necessary to determine whether states are almost unobservable.

States that are either uncontrollable or unobservable do not influence the input--output relation of a system, and cannot themselves be influenced by a control input signal based on output measurements. Such systems are characterized by a pole (an eigenvalue of the matrix $\hat{A}$) that does not appear in the transfer function due to being canceled by a zero of the transfer function having the same value. If the system is almost uncontrollable or almost unobservable, then the transfer function will have a  zero very near to a pole. In this case, it is possible to design a control signal based on state estimates. However, it may be shown using the theory of fundamental design limitations \cite{freudenbergright1985,loozetradeoffs2010} that the resulting feedback control system will necessarily have a very small stability margin, and be sensitive to disturbances and parameter variations. Often, the solution to this problem requires the introduction of additional control inputs or additional measurements.

In conclusion, the property of controllability, although important, is by no means sufficient to assure a well behaved control problem. One might expect this to be true since the property is generically present, as is the property of observability. The more relevant questions are thus whether the system is almost uncontrollable, almost unobservable, or possesses almost pole--zero cancellations.

\section{Acknowledgments}
The authors thank Jean-Jacques Slotine and Yang-Yu Liu for their
generous assistance in developing the arguments presented here. We
further thank Jessy Grizzle, Tom Daniel, Eric Fortune, and Andrew
Lamperski for a number of useful discussions.  Thanks to Shreyas
Sundaram for providing a critical evaluation of the manuscript and for
pointing out a number of relevant references. This material is based
upon work supported by the National Science Foundation under Grants
0845749 (NJC) and SBE-0915005 (CTB), and by US NIGMS MIDAS Center for
Communicable Disease Dynamics 1U54GM088588 (CTB) at Harvard
University.

\bibliography{netcontrol}

\end{document}